\documentclass[conference]{IEEEtran}
\usepackage{mathrsfs,amsmath,amssymb,epsfig,amsfonts,fancyhdr,graphicx,cite,amsthm}
\usepackage{engord}

\setlength{\abovecaptionskip}{0pt}


\newtheorem{thm}{Theorem}[]
\newtheorem{cor}{Corollary}[]

\theoremstyle{remark}
\newtheorem{rem}[]{Remark}

\theoremstyle{definition}

\ifCLASSINFOpdf
\else
\fi
%
%

\hyphenation{op-tical net-works semi-conduc-tor}

\begin{document}
%
\title{Combined Decode-Forward and Layered Noisy Network Coding Schemes for Relay Channels}

\author{\authorblockN{Peng Zhong and Mai Vu\\}
\authorblockA{Department of Electrical and Computer Engineering, McGill University, Montreal\\
Emails: peng.zhong@mail.mcgill.ca, mai.h.vu@mcgill.ca}}


%



\maketitle

\begin{abstract}

We propose two coding schemes  combining decode-forward (DF) and noisy network coding (NNC) with different flavors. The first is a combined DF-NNC scheme for the one-way relay channel which includes both DF and NNC as special cases by performing rate splitting, partial block Markov encoding and NNC.
The second combines two different DF strategies and layered NNC   for the two-way relay channel. One DF strategy performs coherent block Markov encoding at the source at the cost of power splitting at the relay, the other performs independent source and relay encoding but with full relay power, and layered NNC allows a different compression rate for each destination.    Analysis and simulation show that both proposed schemes   supersede each individual   scheme and take full advantage of both DF and NNC.

\end{abstract}


%
\IEEEpeerreviewmaketitle

\section{Introduction}
The relay channel (RC) first introduced by van der Meulen  consists of a source aiming to communicate with a destination with the help of a relay. In \cite{cover1979capacity}, Cover and El Gamal propose the fundamental decode-forward (DF), compress-forward (CF) and combined DF-CF schemes. 
Lim, Kim, El Gamal and Chung recently put forward a noisy network coding (NNC) scheme \cite{sung2011noisy} for the general multi-source network. NNC is based on CF relaying but involves three new ideas (message repetition, no Wyner-Ziv binning and simultaneous decoding) and outperforms CF for multi-source networks. In \cite{Ramalingam2011superposition}, Ramalingam and Wang propose a superposition NNC scheme for  restricted relay networks, in which source nodes cannot act as relays, by combining  DF and NNC and show some performance improvement over NNC. Their scheme, however, does not include DF relaying rate because of no block Markov encoding.\par

The classical one-way relay channel can be generalized to the two-way relay channel (TWRC), in which two users exchange messages with the help of a relay. In \cite{rankov2006achievable}, Rankov and Wittneben apply several relay strategies, including decode-forward and compress-forward,  to the TRWC. In their proposed DF scheme, the two users perform partial block Markov encoding, and the relay sends a superposition of the codewords for the two decoded messages in each block. A different DF strategy is proposed in \cite{xie2007network} by Xie, in which the users encode independently with the relay without block Markovity, and the relay sends a codeword for the random binning of the two decoded messages. These two DF schemes do not include each other in general. In \cite{Lim2010layered}, Lim, Kim, El Gamal and Chung propose an improved NNC scheme termed  "layered noisy network coding" (LNNC). The relay compresses its observation into two layers: one is used at both destinations, while the other is only used at one destination.\par

In this paper, we first propose a combined DF-NNC scheme for the one-way  channel.
Different from \cite{Ramalingam2011superposition}, our proposed scheme performs block Markov encoding and hence encompasses both DF relaying and NNC as special cases. We then propose a combined DF-LNNC scheme for the TWRC. This scheme also includes partial block Markov encoding and, in addition, performs layered NNC. Analysis and numerical results show that this  scheme outperforms each individual scheme in \cite{rankov2006achievable, xie2007network, Lim2010layered} and also the combined scheme in \cite{Ramalingam2011superposition}.

\section{Channel Models}

\subsection{Discrete memoryless relay channels}
The discrete memoryless two-way relay channel (DM-TWRC) is
denoted by $(\mathcal{X}_1 \times \mathcal{X}_2 \times
\mathcal{X}_r,$ $p(y_1,y_2,y_r|x_1,x_2,x_r),$ $\mathcal{Y}_1 \times
\mathcal{Y}_2 \times \mathcal{Y}_r)$ where $(x_1, y_1), (x_2,y_2), (x_r, y_r)$ are input and output signals of user 1, user 2 and the relay, respectively.
A $(n,2^{nR_1},2^{nR_2},P_e)$ code for a DM-TWRC consists of two
message sets $\mathcal{M}_1=[1:2^{nR_1}]$ and
$\mathcal{M}_2=[1:2^{nR_2}]$, three encoding functions
$f_{1,i},f_{2,i},f_{r,i}$, $i=1, \ldots, n$ and two decoding function
$g_1,g_2$ defined as follows:
\begin{align}
&x_{1,i}=f_{1,i}(M_1, Y_1^{i-1}),x_{2,i}=f_{2,i}(M_2,Y_2^{i-1}),x_{r,i}=f_{r,i}(Y_r^{i-1}),\nonumber\\
&g_1:\mathcal{Y}^n_1 \times \mathcal{M}_1 \rightarrow \mathcal{M}_2,
g_2:\mathcal{Y}^n_2 \times \mathcal{M}_2 \rightarrow \mathcal{M}_1\nonumber
\end{align}
The definitions for error probability, achievable rates and capacity follow standard ones in \cite{gamal2010lecture}.\par
The one-way relay channel can be seen as a special case of the TWRC by setting $M_2 = X_2 = Y_1 = \emptyset, f_2 = g_1 = \text{null}$ and $R_2 = 0$.

\subsection{Gaussian one-way and two-way relay channels}
The Gaussian one-way relay channel can be modeled as
\begin{align}
\label{Gaussian_one_model}
  Y=gX+g_{2}X_r+Z, \quad
  Y_r=g_{1}X+Z_r,
\end{align}
where $Z, Z_r\sim\mathcal{N}(0,1)$ are independent Gaussian noises, and $g, g_{1}, g_{2}$ are the corresponding channel gains.\par
The Gaussian two-way relay channel can be modeled as
\begin{align}
  Y_1&=g_{12}X_2+g_{1r}X_r+Z_1 \nonumber \\
  Y_2&=g_{21}X_1+g_{2r}X_r+Z_2 \nonumber \\
  Y_r&=g_{r1}X_1+g_{r2}X_2+Z_r,
  \label{GTWRC}
\end{align}
where $Z_1, Z_2, Z_r\sim\mathcal{N}(0,1)$ are independent Gaussian noises and $g_{12}, g_{1r}, g_{21}, g_{2r}, g_{r1}, g_{r2}$ are the corresponding channel gains. For both channels, the average
input power constraints at each user and the relay are all
$P$.

\section{One-way Relay Channel}
In this section, we propose a coding scheme combing decode-forward\cite{cover1979capacity} and noisy network coding\cite{sung2011noisy} for the one-way relay channel. The source splits its message into two parts, a common   and a private message. The common message is different in each block and  is decoded at both the relay and destination as in decode-forward, while the private message is the same for all blocks and is decoded only at the destination as in noisy network coding.  The source encodes the common message with block Markovity, then superimposes the private message on top.  The relay decodes the common message at the end of each block and compresses the rest as in NNC.  In the next block, it sends a codeword which encodes both the compression index and the decoded common message of the previous block. The destination decodes each common message by forward sliding-window decoding over two consecutive blocks. Then at the end of all blocks, it decodes the private message by simultaneous decoding over all blocks. Our proposed scheme includes both DF relaying and NNC as special cases and outperforms  superposition NNC in \cite{Ramalingam2011superposition} in that we use block Markov encoding for the common messages, which provides coherency between source and relay and improves the transmission rate.

\subsection{Achievable rate for the DM one-way relay channel}
\begin{thm}
\label{rate_one}
The rate $R=R_{10}+R_{11}$ is achievable for the one-way relay channel by combining decode-forward and noisy network coding
\begin{align}
R_{10}&\leq \min \{I(Y_r;U|U_r,X_r), I(U;Y|U_r,X_r)+I(U_r;Y)\}\nonumber \\
R_{11}&\leq \min \{I(X;Y,\hat{Y}_r|U,U_r,X_r),I(X,X_r;Y|U,U_r)\nonumber\\
\label{one_way_bounds}
&~~~~~~~~~-I(\hat{Y}_r;Y_r|X_r,U,U_r,X,Y)\}
\end{align}
for some joint distribution that factors as
\begin{align}
\label{one_way_distribution}
&p(u_r)p(u|u_r)p(x|u,u_r)p(x_r|u_r)\nonumber\\
&p(y,y_r|x,x_r)p(\hat{y}_r|y_r,u,u_r,x_r).
\end{align}
\end{thm}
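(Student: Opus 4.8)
The plan is to use block-Markov superposition coding over $B$ blocks of length $n$, and to establish achievability by the usual codebook-generation, encoding, decoding, and error-analysis steps, letting $B\to\infty$ at the end so that the overhead of the first and last blocks vanishes. Fixing the pmf in \eqref{one_way_distribution}, for each block I would generate cloud centers $u_r^n(m')$ i.i.d.\ from $\prod_i p(u_{r,i})$ indexed by the previous block's common message $m'$; superimpose $u^n(m\,|\,m')$ from $\prod_i p(u_i|u_{r,i})$ for the current common message $m$; superimpose $x^n(\ell\,|\,m,m')$ from $\prod_i p(x_i|u_i,u_{r,i})$ for the private message $\ell$; attach relay codewords $x_r^n(k\,|\,m')$ from $\prod_i p(x_{r,i}|u_{r,i})$ carrying a compression index $k$; and, conditioned on each $(u^n,u_r^n,x_r^n)$, generate compression codewords $\hat y_r^n(k)$ through the test channel $p(\hat y_r|y_r,u,u_r,x_r)$.

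For encoding, in block $j$ the source sends $x^n(m_{11}\,|\,m_{10,j},m_{10,j-1})$, i.e.\ a fresh common message $m_{10,j}$ together with the fixed private message $m_{11}$ repeated in every block. At the end of block $j$ the relay, already holding $m_{10,j-1}$, decodes $m_{10,j}$ and then uses the covering lemma to pick a compression index $k_j$ with $(\hat y_r^n(k_j),y_r^n(j))$ jointly typical, which succeeds w.h.p.\ when $\hat R>I(\hat Y_r;Y_r\,|\,U,U_r,X_r)$. In block $j+1$ it transmits $x_r^n(k_j\,|\,m_{10,j})$, so that the relay signal conveys the previous compression index coherently on the cloud center $u_r^n(m_{10,j})$ shared with the source.

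The error analysis has three parts. (i) Relay decoding of $m_{10,j}$ from joint typicality of $u^n(\cdot\,|\,m_{10,j-1})$ with $y_r^n(j)$ gives $R_{10}\le I(U;Y_r\,|\,U_r,X_r)$. (ii) Destination decoding of $m_{10,j}$ by forward sliding-window decoding over blocks $j$ and $j+1$: the message is seen through $u^n$ in block $j$ (yielding $I(U;Y\,|\,U_r,X_r)$) and reappears through the relay cloud center $u_r^n(m_{10,j})$ in block $j+1$ (yielding $I(U_r;Y)$), and summing the two independent blocks gives $R_{10}\le I(U;Y\,|\,U_r,X_r)+I(U_r;Y)$; combining (i) and (ii) yields the $R_{10}$ bound. (iii) After all $B$ blocks the destination performs NNC simultaneous joint-typicality decoding of $m_{11}$ together with the compression indices $(k_1,\dots,k_B)$, with $U,U_r,X_r$ already resolved and used as conditioning; the two dominant error events---correct compression indices in every block versus an erroneous compression index in some block---produce, after substituting the covering constraint on $\hat R$ and letting $B\to\infty$, the two terms $I(X;Y,\hat Y_r\,|\,U,U_r,X_r)$ and $I(X,X_r;Y\,|\,U,U_r)-I(\hat Y_r;Y_r\,|\,X_r,U,U_r,X,Y)$.

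The main obstacle is part (iii): the simultaneous decoding of the single repeated private message jointly with all $B$ compression indices. The careful enumeration of joint-typicality error events over the $B$-fold product, the maximization over the subset of blocks in which the compression index is decoded incorrectly, the elimination of $\hat R$ via the covering bound, and the limit $B\to\infty$ (which removes both the per-block slack and the initialization/termination overhead) are where the real work lies. Parts (i) and (ii) are routine superposition and sliding-window arguments, and the coherent reuse of $u_r^n(m_{10,j})$ at both source and relay is exactly what lets the DF relaying rate $I(U_r;Y)$ appear, recovering pure DF and pure NNC as the special cases $R_{11}=0$ and $R_{10}=0$, respectively.
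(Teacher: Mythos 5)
Your proposal matches the paper's proof in all essentials: the same rate-split into a block-Markov common message (decoded by relay and destination via sliding-window decoding) and a repeated private message (decoded by NNC-style simultaneous joint typicality over all blocks), the same codebook structure, and the same two-way split of error events yielding the bounds in \eqref{one_way_bounds}. The only deviation is that you have the relay decode the common message and then apply the covering lemma sequentially, whereas the paper performs joint decoding of the message and compression index; the paper's own Remark 1 notes these give identical rate constraints, so this is immaterial.
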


\begin{proof}
We use a block coding scheme in which each user sends $b-1$ messages over $b$ blocks of $n$ symbols each.
\subsubsection{Codebook generation}
Fix a joint distribution as in \eqref{one_way_distribution}. For each block $j\in[1:b]$:
\begin{itemize}
\item Independently generate $2^{nR_{10}}$ sequences $u_{r,j}^n(m_{j-1})\sim\prod ^n_{i=1}p(u_{r,i}) $, where $m_{j-1} \in [1:2^{nR_{10}}]$.
\item For each $m_{j-1}$, independently generate $2^{nR_{10}}$ sequences $u_j^n(m_j|m_{j-1})\sim\prod ^n_{i=1}p(u_{i}|u_{r,i}) $, $m_{j} \in [1:2^{nR_{10}}]$.
\item For each $(m_{j-1},m_j)$, independently generate $2^{nbR_{11}}$ sequences $x_j^n(m|m_j,m_{j-1})\sim\prod ^n_{i=1}p(x_{i}|$ $u_{i},u_{r,i}) $,   $m \in [1:2^{nbR_{11}}]$.
\item For each $m_{j-1}$, independently generate $2^{n\hat{R}}$ sequences $x_{r,j}^n(k_{j-1}|m_{j-1})\sim\prod ^n_{i=1}p(x_{r,i}|u_{r,i}) $,  $k_{j-1} \in [1:2^{n\hat{R}}]$.
\item For each $(m_{j-1},m_j,k_{j-1})$, independently generate $2^{n\hat{R}}$ sequences  $\hat{y}_{r,j}^n(k_j|k_{j-1},m_{j-1},m_j)\sim\prod ^n_{i=1}p(\hat{y}_{r,i}|x_{r,i},u_{r,i},u_{i}) $,   $k_{j} \in [1:2^{n\hat{R}}]$.
\end{itemize}
\subsubsection{Encoding}
In block $j$, the source sends $x_j^n(m|m_j,m_{j-1})$. Assume that the relay has successfully found   compression index $k_{j-1}$ and decoded message $m_{j-1}$ of the previous block, it then sends $x_{r,j}^n(k_{j-1}|m_{j-1})$.

\subsubsection{Decoding at the relay}
At the end of block $j$, upon receiving $y^n_{r,j}$, the relay finds a $\hat{k}_j$ and a unique $\hat{m}_j$ such that
\begin{align}
\label{decoding_rule_relay}
(u_{r,j}^n(m_{j-1}),u_j^n(\hat{m}_j|m_{j-1}),x_{r,j}^n(k_{j-1}|m_{j-1}),&\nonumber\\
\hat{y}_{r,j}^n(\hat{k}_j|k_{j-1},m_{j-1},\hat{m}_j),{y}^n_{r,j}&)\in T^{(n)}_{\epsilon},
\end{align}
where $T^{(n)}_{\epsilon}$ denotes the strong typical set \cite{gamal2010lecture}. By the covering lemma and standard analysis, $P_e \rightarrow 0$ as $n\rightarrow \infty$ if
\begin{align}
\hat{R}& > I(\hat{Y}_r;Y_r|U,U_r,X_r)\nonumber \\
\label{Rhatone}
\hat{R}+R_{10}&\leq I(Y_r;\hat{Y}_r,U|U_r,X_r).
\end{align}



\subsubsection{Decoding at the destination}
At the end of each block $j$, the destination finds the unique $\hat{m}_{j-1}$ such that
\begin{align*}
(u_{j-1}^n(\hat{m}_{j-1}|m_{j-2}),u_{r,j-1}^n(m_{j-2}),&\nonumber\\
x_{r,j-1}^n(k_{j-2}|m_{j-2}),{y}^n_{j-1})&\in T^{(n)}_{\epsilon}\\
\text{and~~~~~~~~~~~}  (u_{r,j}^n(\hat{m}_{j-1}),{y}^n_{j})&\in T^{(n)}_{\epsilon}.
\end{align*}
Following standard  analysis, $P_e \rightarrow 0$ as $n\rightarrow \infty$ if
\begin{align}
\label{R10one}
R_{10} \leq I(U;Y|U_r,X_r)+I(U_r;Y).
\end{align}\par
At the end of block $b$, it finds the unique $\hat{m}$ such that
\begin{align*}
(u_{r,j}^n(m_{j-1}),u_j^n(m_j|m_{j-1}),x_{r,j}^n(\hat{k}_{j-1}|m_{j-1}),&\\
x_j^n(\hat{m}|m_j,m_{j-1}),\hat{y}_{r,j}^n(\hat{k}_j|\hat{k}_{j-1},m_{j-1},m_j),{y}^n_{j}&)\in T^{(n)}_{\epsilon}
\end{align*}
for all $j\in [1:b]$ and some vector $\bold{\hat{k}_j}\in[1:2^{n\hat{R}}]^b$.
As in \cite{sung2011noisy}, $P_e \rightarrow 0$ as $n\rightarrow \infty$ if
\begin{align}
\label{R11one}
R_{11}&\leq \min \{I_1,I_2-\hat{R}\},\quad \text{where}\\
I_1&=I(X;Y,\hat{Y}_r|U,U_r,X_r)\nonumber\\
I_2&=I(X,X_r;Y|U,U_r)+I(\hat{Y}_r;Y,X|X_r,U,U_r).
\end{align}\par
By applying Fourier-Motzkin Elimination to inequalities \eqref{Rhatone}-\eqref{R11one},   the rate in Theorem \ref{rate_one} is achievable.
\end{proof}

\begin{rem}
In relay decoding \eqref{decoding_rule_relay}, we perform joint decoding of both the message and the compression index. If we use sequential decoding to decode the message first and then to find the compression index, we still get the same rate constraints as in Theorem \ref{rate_one}.
\end{rem}

\begin{rem}
We can check that the rate in \eqref{one_way_bounds} is equivalent to the combined DF-CF rate in Theorem 7 \cite{cover1979capacity} for the one-way relay channel. This combined DF-CF scheme is recently shown by Luo et al. \cite{Luo2011on} to outperform both individual schemes in the Gaussian channel for a small range of SNR. However, combined DF-NNC is expected to outperform combined DF-CF for multi-source networks. 
\end{rem}


\begin{rem}
By setting $U_r=X_r,U=X,\hat{Y}_r=0$, the rate in Theorem \ref{rate_one} reduces to the decode-forward relaying rate \cite{cover1979capacity} as
\begin{align}
\label{original_one_way_df}
R \leq \min\{I(X;Y_r|X_r),I(X,X_r;Y)\}
\end{align}
for some $p(x_r)p(x|x_r)p(y,y_r|x,x_r)$.
By setting $U=U_r=0$, it reduces to the NNC rate \cite{sung2011noisy} as
\begin{align*}
R\leq \min \{I(X;Y,\hat{Y}_r|X_r),I(X,X_r;Y)-I(\hat{Y}_r;Y_r|X_r,X,Y)\}
\end{align*}
for some  $p(x)p(x_r)p(y,y_r|x,x_r)p(\hat{y}_r|y_r,x_r).$
\end{rem}

\begin{rem}
The rate constraints in Theorem \ref{rate_one} are similar to those in superposition NNC (Theorem 1 in \cite{Ramalingam2011superposition}), but the code distribution \eqref{one_way_distribution} is a larger set  because of the joint distribution between $(x,u,u_r)$. Hence the achievable rate by the proposed scheme is higher than that in \cite{Ramalingam2011superposition}. Specifically, the scheme in \cite{Ramalingam2011superposition} does not include the decode-forward relaying rate  in  \eqref{original_one_way_df}.
\end{rem}

\subsection{Achievable rate for the Gaussian one-way relay channel}
We now evaluate the achievable rate in Theorem \ref{rate_one} for the Gaussian one-way relay channel as in \eqref{Gaussian_one_model}.
\begin{cor}
The  following rate is achievable for the Gaussian one-way relay channel
\begin{align}
\label{Gaussian_ach_one}
R\leq  \min&\left\{C\left(\frac{g_1^2\beta_1 ^2}{g_1^2\gamma_1 ^2+1}\right),C\left(\frac{(g{\alpha_1}+g_2{\alpha_2})^2+g^2\beta_1 ^2}{g^2\gamma_1 ^2+g_2^2\beta_2 ^2+1}\right)\right\}+\nonumber\\
&~~~C\left(g^2\gamma_1 ^2+\frac{g_1^2\gamma_1 ^2 \cdot g_2^2\beta_2 ^2}{g^2\gamma_1 ^2+g_1^2\gamma_1 ^2+g_2^2\beta_2 ^2+1}\right)
\end{align}
\begin{align}
\label{power_all}
 \text{where} \quad \alpha_1 ^2+\beta_1 ^2+\gamma_1 ^2 \leq P, \quad
\alpha_2 ^2+\beta_2 ^2 \leq P.
\end{align}
\end{cor}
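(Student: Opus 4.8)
The plan is to specialize Theorem~\ref{rate_one} to a jointly Gaussian input and evaluate every mutual information in closed form. First I would build the auxiliaries from mutually independent Gaussians so as to respect the factorization \eqref{one_way_distribution} while splitting the source and relay powers exactly as in \eqref{power_all}: take $U_r\sim\mathcal{N}(0,\alpha_2^2)$ and $X_r=U_r+V_r$ with $V_r\sim\mathcal{N}(0,\beta_2^2)$ independent, so the relay spends $\alpha_2^2$ on a signal coherent with the source and $\beta_2^2$ on forwarding the compression index; take $X=\frac{\alpha_1}{\alpha_2}U_r+U+V$ with $U\sim\mathcal{N}(0,\beta_1^2)$ and $V\sim\mathcal{N}(0,\gamma_1^2)$ independent, so the source spends $\alpha_1^2$ coherently with the relay, $\beta_1^2$ on the common (decode-forward) auxiliary $U$, and $\gamma_1^2$ on the private (NNC) signal $V$; and compress with a Gaussian test channel $\hat{Y}_r=Y_r+\hat{Z}$, $\hat{Z}\sim\mathcal{N}(0,\sigma^2)$. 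One checks immediately that this obeys \eqref{one_way_distribution} and meets the power constraints \eqref{power_all}.

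Next I would evaluate the two $R_{10}$ bounds. At the relay, conditioning on $(U_r,X_r)$ removes the coherent term and leaves only the private signal $V$ as interference, so $I(Y_r;U\mid U_r,X_r)=C\left(\frac{g_1^2\beta_1^2}{g_1^2\gamma_1^2+1}\right)$, the first term of the outer $\min$. For the destination I would use that the common auxiliary $U_r$ is carried both by $X$ (through $\frac{\alpha_1}{\alpha_2}U_r$) and by $X_r$, so the two contributions add coherently and the $U_r$ component is received with power $(g\alpha_1+g_2\alpha_2)^2$; decoding $(U,U_r)$ jointly, with $V$ and $V_r$ treated as noise, gives $I(U,U_r;Y)=C\left(\frac{(g\alpha_1+g_2\alpha_2)^2+g^2\beta_1^2}{g^2\gamma_1^2+g_2^2\beta_2^2+1}\right)$, which is the second term of the $\min$. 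This quantity is no larger than the destination constraint $I(U;Y\mid U_r,X_r)+I(U_r;Y)$ in Theorem~\ref{rate_one}, so it is achievable.

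The additive term is the private rate $R_{11}$, and evaluating it is the step I expect to be the main obstacle. Conditioned on $(U,U_r,X_r)$ the destination has two independent looks at $V$, namely $Y$ through gain $g$ with unit noise and $\hat{Y}_r$ through gain $g_1$ with noise $1+\sigma^2$, so $I(X;Y,\hat{Y}_r\mid U,U_r,X_r)=C\left(g^2\gamma_1^2+\frac{g_1^2\gamma_1^2}{1+\sigma^2}\right)$, while the competing bound $I(X,X_r;Y\mid U,U_r)-I(\hat{Y}_r;Y_r\mid X_r,U,U_r,X,Y)$ worsens as the compression is refined. The work is to carry out the Fourier--Motzkin step of Theorem~\ref{rate_one} in the Gaussian case and optimize $\sigma^2$ (equivalently $\hat{R}$): the relay-to-destination link can convey the index only at the rate supported by $V_r$'s received power $g_2^2\beta_2^2$, which fixes the best compression noise at $\sigma^2=\frac{g^2\gamma_1^2+g_1^2\gamma_1^2+1}{g_2^2\beta_2^2}$, at which point the two bounds meet and yield exactly $C\left(g^2\gamma_1^2+\frac{g_1^2\gamma_1^2\,g_2^2\beta_2^2}{g^2\gamma_1^2+g_1^2\gamma_1^2+g_2^2\beta_2^2+1}\right)$. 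Adding $R=R_{10}+R_{11}$ then produces \eqref{Gaussian_ach_one}. The delicate points are keeping the interference bookkeeping consistent between the conditioned and unconditioned terms and verifying that this $\sigma^2$ simultaneously satisfies the compression constraint \eqref{Rhatone} and balances the two $R_{11}$ bounds.
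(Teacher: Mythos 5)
Your proposal is correct and takes essentially the same route as the paper: the paper's proof consists precisely of specializing Theorem~\ref{rate_one} to the Gaussian signaling $U=\alpha_1 S_1+\beta_1 S_2$, $X=U+\gamma_1 S_3$, $X_r=\alpha_2 S_1+\beta_2 S_4$, $\hat{Y}_r=Y_r+Z'$ with $Z'\sim\mathcal{N}(0,Q)$, which induces the same joint distribution as your construction (your $U$ excludes the coherent component $\alpha_1 S_1$, but since $U$ only ever appears conditioned on or jointly with $U_r$, every mutual-information term is identical). Your explicit evaluation --- including the lower-bounding of the destination constraint by $I(U,U_r;Y)$ and the balancing compression noise $\sigma^2=(g^2\gamma_1^2+g_1^2\gamma_1^2+1)/(g_2^2\beta_2^2)$ that makes the two $R_{11}$ bounds meet at the stated expression --- correctly fills in the computations the paper leaves implicit.
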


To achieve the rate in \eqref{Gaussian_ach_one}, we set
\begin{align}
U&=\alpha_1 S_1+ \beta_1 S_2,~~~~
X=U+ \gamma_1 S_3\nonumber\\
X_r&= \alpha_2 S_1+ \beta_2 S_4,~~~~
\hat{Y}_r=Y_r+Z'
\end{align}
where $S_1, S_2, S_3, S_4\sim\mathcal{N}(0,1)$ and $Z'\sim\mathcal{N}(0,Q)$ are independent, and the power allocations satisfy constraint \eqref{power_all}.\par

\section{Two-way relay channel}
In this section, we propose a combined scheme based on both decode-forward strategies as in \cite{rankov2006achievable} \cite{xie2007network} and layered noisy network coding \cite{Lim2010layered} for the two-way relay channel. Each user splits its message into three parts: an independent common, a Markov common and a private message. The independent and Markov common messages are encoded differently at the source and are different for each block, both are decoded at both the relay and  destination as in decode-forward. The private message is the same for all blocks and is decoded only at the destination as in noisy network coding.  Each user encodes the Markov common message with block Markov encoding as in \cite{rankov2006achievable}, then superimposes the independent common message on top of it without Markovity, and at last superimposes the private message on top of both.  The relay decodes the two common messages and compresses the rest into two layers: a common   and a refinement layer. In the next block, the relay sends a codeword which encodes the two decoded common messages and two layered compression indices. Then at the end of each block, each user decodes two common messages of the other user by sliding-window decoding over two consecutive blocks. At the end of all blocks, one user uses the information of the common layer to simultaneously decode the private message of the other user, while the other user uses the information of both  the common and refinement layers to decode the other user's private message.\par

\begin{thm}
\label{thm:rate_two}
Let $\mathcal{R}_1$ denote the set of  $(R_1,R_2)$ as follows:
\begin{align}
R_1 &\leq \min\{I_5, I_{12}\}+\min\{I_5-I_1, I_{16}\}\nonumber\\
R_2 &\leq \min\{I_6, I_{14}\}+\min\{I_{17}-I_2, I_{19}\}\nonumber\\
R_1+R_2 &\leq \min\{\min\{I_5,I_{12}\}+I_{15}+I_{18}-I_2+\min\{I_6,I_{14}\},\nonumber\\&~~~~I_{10}+I_{15}+I_{18}-I_2,\nonumber\\
&~~~~I_{10}+\min\{I_{15}-I_{1},I_{16}\}+\min\{I_{17}-I_2,I_{19}\}\}\nonumber\\
2R_1+R_2 &\leq \min\{I_5,I_{12}\}+I_{15}+I_{18}-I_2+I_{10}\nonumber\\
\label{rate_two_1}
&~~~+\min\{I_{15}-I_{1},I_{16}\}
\end{align}
for some joint distribution
\begin{align}
\label{joint_dis}
P^* \triangleq &p(w_1)p(u_1|w_1)p(v_1|w_1,u_1)p(x_1|w_1,u_1,v_1)p(w_2)\nonumber\\
&p(u_2|w_2)p(v_2|w_2,u_2)p(x_2|w_2,u_2,v_2)p(v_r|w_1,w_2)\nonumber\\
&p(u_r|v_r,w_1,w_2)
p(x_r|u_r,v_r,w_1,w_2)\nonumber\\
&p(\hat{y}_r,\tilde{y}_r|y_r,x_r,u_r,v_r,w_1,w_2,u_1,v_1,u_2,v_2),
\end{align}
where $I_{j}$ are defined in \eqref{rate_two_first}-\eqref{rate_two_last}, then
$\mathcal{R}_1$ is achievable if user 2 only uses the common layer, while user 1 uses both the common   and refinement layers. If the two users exchange   decoding layers, they can achieve a corresponding set  $\mathcal{R}_2$. By time sharing, the convex hull of  $\mathcal{R}_1 \cup \mathcal{R}_2$ is achievable.
\end{thm}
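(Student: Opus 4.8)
The plan is to extend the block-Markov argument used for Theorem~\ref{rate_one} to the two-way setting, now carrying two decode-forward message types (one coherent/block-Markov, one independent) together with a two-layer compression. As before I would transmit $b-1$ message pairs over $b$ blocks of $n$ symbols, generating the codebook according to $P^*$ in \eqref{joint_dis}: the Markov common codewords $w_i$ are built with block-Markov dependence so that the relay input $x_r$ can be superimposed coherently on the previously decoded $(w_1,w_2)$ as in \cite{rankov2006achievable}; the independent common codewords $u_i$ are superimposed on $w_i$ without Markovity in the style of \cite{xie2007network}; the private codewords are drawn once and repeated across all blocks as in the noisy-network-coding scheme of \cite{sung2011noisy}; and the relay's two compression codewords for $\hat{y}_r$ and $\tilde{y}_r$ are generated through the covering lemma and indexed into $x_r$ along the auxiliary chain $v_r\to u_r\to x_r$ as in \cite{Lim2010layered}, with one serving as the common layer decoded at both destinations and the other as the refinement layer decoded only at user~1.

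The encoding and relay-decoding steps follow the one-way template. At the end of each block the relay jointly decodes both users' Markov and independent common messages from $y_r$ and compresses the residual observation into the two layers; the covering lemma fixes the two compression rates from below, while joint typicality of the common codewords with $y_r$ fixes the individual and sum common rates from above. Each destination then knows its own transmitted codewords, so it performs forward sliding-window decoding over two consecutive blocks to recover the other user's two common messages, producing the $\min\{I_5,I_{12}\}$ and $\min\{I_6,I_{14}\}$ type terms, and finally runs simultaneous decoding over all $b$ blocks for the other user's private message. The essential asymmetry lives in this last step: user~2 decodes using only the common compression layer whereas user~1 uses both layers, which is exactly what separates the layered terms $I_{16}$ and $I_{19}$ and singles out the region $\mathcal{R}_1$.

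Having collected all the inequalities (the two compression-rate bounds, the relay common-rate bounds, the destination common-rate bounds, and the destination private-rate bounds, each parametrized by the rate-split variables and the two compression rates), I would apply Fourier--Motzkin elimination to remove the auxiliary rates and arrive at the four families of constraints in \eqref{rate_two_1} defining $\mathcal{R}_1$; the individual, sum, and weighted-sum (the $2R_1+R_2$) constraints emerge naturally from this elimination. Exchanging the roles of the two users, and correspondingly swapping which destination is allowed to exploit both compression layers, reproduces the identical analysis with the indices permuted and gives the second region $\mathcal{R}_2$. Since both are achieved by valid codes, a standard time-sharing argument then attains every point in the convex hull of $\mathcal{R}_1\cup\mathcal{R}_2$.

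The hard part will not be the mechanical (if lengthy) Fourier--Motzkin step but the error analysis of the layered simultaneous decoding intertwined with the two decode-forward types. I expect the main care to be needed in bounding the joint-typicality error events across all $b$ blocks when one destination exploits a single compression layer and the other exploits both, while the relay input must stay coherent with the block-Markov codewords; organizing these events so that the resulting constraints collapse, after Fourier--Motzkin, to precisely the stated combinations of $I_1,\ldots,I_{19}$ is the crux of the argument.
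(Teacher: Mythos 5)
Your proposal is correct and takes essentially the same route as the paper's proof: the same three-way message split (block-Markov common, independent common, NNC-style repeated private), the same two-layer compression with user 2 using only the common layer and user 1 using both, relay MAC-style decoding plus sliding-window decoding of the common messages and all-blocks simultaneous decoding of the private messages, followed by Fourier--Motzkin elimination, role exchange to get $\mathcal{R}_2$, and time sharing. Two minor imprecisions that do not change the approach: in the paper the auxiliary $v_r$ carries the index of the pair of independent common messages $(m_{11,j-1},m_{21,j-1})$ (decodable at each destination using its own message as side information), not a compression index, and the private-message codebooks are generated \emph{freshly in each block} (only the message is repeated, with $2^{nbR_{l2}}$ new sequences per block) rather than ``drawn once and repeated,'' since the independence of codebooks across blocks is what the simultaneous-decoding error analysis relies on.
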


\begin{proof}
We use a block coding scheme in which each user sends $b-1$ messages over $b$ blocks of $n$ symbols each.
\subsubsection{Codebook generation}
Fix a joint distribution $P^*$ as in \eqref{joint_dis}. Each user $l \in \{1,2\}$ splits its message into three parts: $m_{l0}, m_{l1}$ and $m_{l2}$. For each $j\in[1:b]$ and $l \in \{1,2\}$
\begin{itemize}
\item Independently generate $2^{nR_{l0}}$ sequences $w_{l,j}^n(m_{l0,j-1})\sim\prod ^n_{i=1}p(w_{l,i})$,  $m_{l0,j-1} \in [1:2^{nR_{l0}}]$.
\item For each $m_{l0,j-1}$, independently generate $2^{nR_{l0}}$ sequences $u_{l,j}^n(m_{l0,j}|m_{l0,j-1})\sim\prod ^n_{i=1}p(u_{l, i}|w_{l, i})$, $m_{l0,j} \in [1:2^{nR_{l0}}]$.
\item For each $m_{l0,j-1},m_{l0,j}$, independently generate $2^{nR_{l1}}$ sequences $v_{l,j}^n(m_{l1,j}|m_{l0,j},m_{l0,j-1})\sim\prod ^n_{i=1}p(v_{l, i}|u_{l, i},w_{l, i})$,   $m_{l1,j} \in [1:2^{nR_{l1}}]$.
\item For each $m_{l0,j-1},m_{l0,j},m_{l1,j}$, independently generate $2^{nbR_{l2}}$ sequences $x_{l,j}^n(m_{l2}|m_{l1,j},m_{l0,j},m_{l0,j-1})\sim\prod ^n_{i=1}p(x_{l, i}|v_{l, i},u_{l, i},w_{l, i})$,   $m_{l2} \in [1:2^{nbR_{l2}}]$.
\item For each pair $(m_{10,j-1},m_{20,j-1})$, independently generate $2^{n(R_{11}+R_{21})}$ sequences $v_r^{n}(K|m_{10,j-1},m_{20,j-1})\sim\prod ^n_{i=1}p(v_{ri}|w_{1, i},w_{2, i}) $, where $K \in [1:2^{n(R_{11}+R_{21})}]$. Map each pair  $(m_{11,j-1},m_{21,j-1})$ to one $K$.
\item For each vector $\bold{m}_{j-1}=(m_{10,j-1},m_{20,j-1},$ $m_{11,j-1},m_{21,j-1})$, independently generate $2^{n\tilde{R}}$ sequences $u_{r,j}^n(t_{j-1}|\bold{m}_{j-1})\sim\prod ^n_{i=1}p(u_{r, i}|v_{r, i},w_{1, i},w_{2, i})$,   $t_{j-1} \in [1:2^{n\tilde{R}}]$.
\item For each $(t_{j-1},\bold{m}_{j-1})$, independently generate $2^{n\hat{R}}$ sequences $x_{r,j}^n(l_{j-1}|t_{j-1},\bold{m}_{j-1})\sim \prod ^n_{i=1}p(x_{r, i}|u_{r, i},v_{r, i},$ $w_{1, i},w_{2, i})$,   $l_{j-1} \in [1:2^{n\hat{R}}]$.
\item For each $(t_{j-1},\bold{m}_{j-1},\bold{m}_{j})$, independently generate $2^{n\tilde{R}}$ sequences $\tilde{y}_{r,j}^n(t_{j}|t_{j-1},\bold{m}_{j-1},\bold{m}_{j})\sim \prod ^n_{i=1}p(\tilde{y}_{r, i}|$ $u_{r, i},v_{r, i},w_{1, i},w_{2, i},u_{1, i},u_{2, i},v_{1, i},v_{2, i})$,   $t_{j} \in [1:2^{n\tilde{R}}]$.
\item For each $(t_{j},t_{j-1},l_{j-1},\bold{m}_{j-1},\bold{m}_{j})$, independently generate $2^{n\hat{R}}$ sequences $\hat{y}_{r,j}^n(l_{j}|l_{j-1},t_{j},t_{j-1},\bold{m}_{j-1},\bold{m}_{j})\sim \prod ^n_{i=1}p(\hat{y}_{r, i}|\tilde{y}_{r, i},x_{r, i},u_{r, i},v_{r, i},w_{1, i},w_{2, i},u_{1, i},u_{2, i},v_{1,i},$ $v_{2, i})$,   $t_{j} \in [1:2^{n\tilde{R}}]$.
\end{itemize}

\subsubsection{Encoding}
In block $j$, user $l\in\{1,2\}$ sends $x_{l,j}^n(m_{l2}|m_{l1,j},m_{l0,j},m_{l0,j-1})$.  
Let $\bold{m}_j=(m_{10,j},m_{20,j},$ $m_{11,j},m_{21,j})$. At the end of block $j$, the relay has decoded $\bold{m}_{j-1},\bold{m}_j$.
Upon receiving $y_{r,j}^n$, it finds an index pair $(\hat{t}_j,\hat{l}_j)$ such that
\begin{align*}
(\hat{y}_{r,j}^n(\hat{l}_{j}|l_{j-1},\hat{t}_{j},t_{j-1},\bold{m}_{j-1},\bold{m}_j),\tilde{y}_{r,j}^n(\hat{t}_{j}|t_{j-1},\bold{m}_{j-1},\bold{m}_j),&\\x_{r,j}^n(l_{j-1}|t_{j-1},\bold{m}_{j-1}),u_{r,j}^n(t_{j-1}|\bold{m}_{j-1}),
w_{1,j}^n,w_{2,j}^n,&\\v_{r,j}^n,u_{1,j}^n,v_{1,j}^n,u_{2,j}^n,v_{2,j}^n,y_{r,j}^n)\in T^{(n)}_{\epsilon}&.
\end{align*}
According to Lemma 1 in \cite{Lim2010layered}, the probability that no such $(\hat{t}_j,\hat{l}_j)$ exists goes to $0$  as $n\rightarrow \infty $ if
\begin{align}
\label{rate_two_first}
&\tilde{R}>I(\tilde{Y}_r;X_r,Y_r|U_r,V_r,U_1,V_1,U_2,V_2,W_1,W_2)\triangleq I_1 \\
&\tilde{R}+\hat{R}>I(\tilde{Y}_r;X_r,Y_r|U_r,V_r,U_1,V_1,U_2,V_2,W_1,W_2)+\nonumber\\
&~~~~~~~I(\hat{Y}_r;Y_r|\tilde{Y}_r,X_r,U_r,V_r,U_1,V_1,U_2,V_2,W_1,W_2)\triangleq I_2.\nonumber
\end{align}
The relay then sends
$x_{r,j+1}^n(l_{j}|t_{j},\bold{m}_j)$ at block $j+1$.

\subsubsection{Relay decoding}
At the end of block $j$, the relay finds the unique $(\hat{m}_{10,j},\hat{m}_{20,j},\hat{m}_{11,j},\hat{m}_{21,j})$ such that
\begin{align*}
\!\!\!\!\!\!\!\!(w_{1,j}^n(m_{10,j-1}),u_{1,j}^n(\hat{m}_{10,j}|m_{10,j-1}),v_{1,j}^n(\hat{m}_{11,j}|\hat{m}_{10,j},m_{10,j-1}),&\\
\!\!\!\!\!\!\!\!w_{2,j}^n(m_{20,j-1}),u_{2,j}^n(\hat{m}_{20,j}|m_{20,j-1}),v_{2,j}^n(\hat{m}_{21,j}|\hat{m}_{20,j},m_{20,j-1}),&\\
v_{r,j}^n(m_{11,j-1},m_{21,j-1}|m_{10,j-1},m_{20,j-1}),y^n_{r,j}) \in T^{(n)}_{\epsilon}.&
\end{align*}
As in the multiple access channel, $P_e \rightarrow 0$ as $n\rightarrow \infty$ if
\begin{align}
R_{11} &\leq I(V_1;Y_r|V_r,W_1,U_1,W_2,U_2,V_2)\triangleq I_3\nonumber\\
R_{21} &\leq I(V_2;Y_r|V_r,W_2,U_2,W_1,U_1,V_1)\triangleq I_4\nonumber\\
R_{10}+R_{11} &\leq I(U_1,V_1;Y_r|V_r,W_1,W_2,U_2,V_2)\triangleq I_5\nonumber\\
R_{20}+R_{21} &\leq I(U_2,V_2;Y_r|V_r,W_2,W_1,U_1,V_1)\triangleq I_6\nonumber\\
R_{11}+R_{21} &\leq I(V_1,V_2;Y_r|V_r,W_1,U_1,W_2,U_2)\triangleq I_7\nonumber\\
R_{10}+R_{11}+R_{21} &\leq I(U_1,V_1,V_2;Y_r|V_r,W_1,W_2,U_2)\triangleq I_8\nonumber\\
R_{20}+R_{11}+R_{21} &\leq I(U_2,V_1,V_2;Y_r|V_r,W_1,W_2,U_1)\triangleq I_9\nonumber\\
R_{10}+R_{20}+R_{11}&+R_{21} \leq\nonumber\\
&\!\!\!\!\!\!\!(U_1,V_1,U_2,V_2;Y_r|V_r,W_1,W_2)\triangleq I_{10}.
\end{align}

\subsubsection{User decoding}
At the end of block $j$, user 2 finds the unique $(\hat{m}_{10,j-1},\hat{m}_{11,j-1})$ such that
\begin{align*}
(u_{1,j-1}^n(\hat{m}_{10,j-1}|m_{10,j-2}),v_{1,j-1}^n(\hat{m}_{11,j-1}| \hat{m}_{10,j-1},m_{10,j-2}),&\\
w_{1,j-1}^n,w_{2,j-1}^n,u_{2,j-1}^n,v_{2,j-1}^n,x_{2,j-1}^n,
v_{r,j-1}^n,y^n_{2,j-1}) \in T^{(n)}_{\epsilon}&\\
\text{and~~~~~~~~~} (w_{1,j}^n(\hat{m}_{10,j-1}),
w_{2,j}^n,u_{2,j}^n,v_{2,j}^n,x_{2,j}^n,&\\
v_{r,j}^n(\hat{m}_{11,j-1},m_{21,j-1}|\hat{m}_{10,j-1},m_{20,j-1}),y^n_{2,j}) \in T^{(n)}_{\epsilon}.&
\end{align*}
The error probability  goes to $0$ as $n\rightarrow \infty $ if
\begin{align}
R_{11}&\leq I(V_1;Y_2|V_r,W_1,U_1,W_2,U_2,V_2,X_2)\\
&~~~+I(V_r;Y_2|W_1,W_2,U_2,V_2,X_2)\triangleq I_{11}\nonumber\\
R_{10}+R_{11}&\leq I(U_1,V_1;Y_2|V_r,W_1,W_2,U_2,V_2,X_2)\nonumber\\
&~~~+I(W_1,V_r;Y_2|W_2,U_2,V_2,X_2)\nonumber\\
&=I(W_1,U_1,V_1,V_r;Y_2|W_2,U_2,V_2,X_2)\triangleq I_{12}.\nonumber
\end{align}
Similarly, for vanishing-error  decoding at user 1
\begin{align}
R_{21}&\leq I(V_2;Y_1|V_r,W_2,U_2,W_1,U_1,V_1,X_1)\\
&~~~+I(V_r;Y_1|W_2,W_1,U_1,V_1,X_1)\triangleq I_{13}\nonumber\\
R_{20}+R_{21}&\leq I(W_2,U_2,V_2,V_r;Y_1|W_1,U_1,V_1,X_1)\triangleq I_{14}.\nonumber
\end{align}\par
At the end of last block $b$, user 2 uses one compression layer to find the unique $\hat{m}_{12}$ such that
\begin{align*}
(x_{1,j}^n(\hat{m}_{12}|m_{11,j},m_{10,j},m_{10,j-1}),v_{1,j}^n,u_{1,j}^n,w_{1,j}^n,
x_{2,j}^n,v_{2,j}^n,u_{2,j}^n,&\\
w_{2,j}^n,
v_{r,j}^n,u_{r,j}^n(\hat{t}_{j-1}|\bold{m}_{j-1}),
\tilde{y}_{r,j}^n(\hat{t}_{j}|\hat{t}_{j-1},\bold{m}_{j-1},\bold{m}_{j}),y^n_{2,j}) \in T^{(n)}_{\epsilon}&
\end{align*}
for all $j\in[1:b]$ and some vector $\bold{\hat{t}_{j}}\in[1:2^{n\tilde{R}}]^b$.
As in \cite{Lim2010layered}, $P_e \rightarrow 0$ as $n\rightarrow \infty$ if
\begin{align}
&R_{12}+\tilde{R}\leq I(X_1,U_r;Y_2|X_2,W_1,U_1,V_1,W_2,U_2,V_2,V_r)+\nonumber\\
& I(\tilde{Y}_r;X_1,X_2,Y_2|U_r,W_1,U_1,V_1,W_2,U_2,V_2,V_r)\triangleq I_{15} \\
&R_{12}\leq I(X_1;\tilde{Y}_r,Y_2|X_2,U_r,W_1,U_1,V_1,W_2,U_2,V_2,V_r)\triangleq I_{16}\nonumber
\end{align}\par
Using both layers, user 1 finds the unique $\hat{m}_{22}$ such that
\begin{align*}
&~~(x_{1,j}^n,v_{1,j}^n,u_{1,j}^n,w_{1,j}^n,x_{2,j}^n(\hat{m}_{22}|m_{21,j},m_{20,j},m_{20,j-1}),y^n_{1,j}\\
&~~v_{2,j}^n,u_{2,j}^n,w_{2,j}^n
v_{r,j}^n,u_{r,j}^n(\hat{t}_{j-1}|\bold{m}_{j-1}),x_{r,j}^n(\hat{l}_{j-1}|\hat{t}_{j-1},\bold{m}_{j-1}),\\ &~~\tilde{y}_{r,j}^n(\hat{t}_{j}|\hat{t}_{j-1},\bold{m}_{j-1},\bold{m}_{j}),\hat{y}_{r,j}^n(\hat{l}_{j}|\hat{t}_{j},\hat{l}_{j-1},\hat{t}_{j-1},\bold{m}_{j-1},\bold{m}_{j})) \in T^{(n)}_{\epsilon}
\end{align*}
for all $j\in[1:b]$ and   some vectors $\bold{\hat{t}_{j}}\in[1:2^{n\tilde{R}}]^b, \bold{\hat{l}_{j}}\in[1:2^{n\hat{R}}]^b$.
As in \cite{Lim2010layered}, $P_e \rightarrow 0$ as $n\rightarrow \infty$ if
\begin{align}
\label{rate_two_last}
&R_{22}+\tilde{R}+\hat{R}\leq I(X_2,X_r;Y_1|X_1,W_1,U_1,V_1,W_2,U_2,V_2,V_r)\nonumber\\
&+I(\hat{Y}_r;X_1,X_2,Y_1|\tilde{Y}_r,X_r,U_r,W_1,U_1,V_1,W_2,U_2,V_2,V_r)\nonumber\\
&+I(\tilde{Y}_r;X_1,X_2,X_r,Y_2|U_r,W_1,U_1,V_1,W_2,U_2,V_2,V_r)\triangleq I_{17}\nonumber\\
&\!\!\!\!R_{22}+\hat{R}\leq I(X_2,X_r;Y_1,\tilde{Y}_r|X_1,U_r,W_1,U_1,V_1,W_2,U_2,V_2,V_r)\nonumber\\
&\!\!\!\!+I(\hat{Y}_r;X_1,X_2,Y_1|\tilde{Y}_r,X_r,U_r,W_1,U_1,V_1,W_2,U_2,V_2,V_r)\triangleq I_{18}\nonumber\\
&R_{22}\leq I(X_2;\tilde{Y}_r,\hat{Y}_r,Y_1|X_1,U_r,X_r,W_1,U_1,V_1,W_2,U_2,V_2,V_r)\nonumber\\ &~~~~~\triangleq I_{19}.
\end{align}\par
By applying Fourier-Motzkin Elimination to inequalities \eqref{rate_two_first}-\eqref{rate_two_last}, the rate region in Theorem \ref{thm:rate_two} is achievable.
\end{proof}

\begin{rem}
By   setting $V_1=W_2=U_2=V_2=X_2=U_r=V_r=\tilde{Y}_r=0$, we obtain the rate for the one-way channel in Theorem \ref{rate_one} from the region in Theorem \ref{thm:rate_two}.
\end{rem}

\begin{rem}
The proposed combined DF-LNNC scheme includes each schemes in \cite{rankov2006achievable, xie2007network, Lim2010layered} as a special case. Specifically, it reduces to the scheme in \cite{rankov2006achievable} by setting $U_1=X_1,U_2=X_2,V_r=X_r,V_1=V_2=U_r=\hat{Y}_r=\tilde{Y}_r=0$, to the scheme in \cite{xie2007network} by setting $V_1=X_1,V_2=X_2,V_r=X_r,W_1=U_1=W_2=U_2=U_r=\hat{Y}_r=\tilde{Y}_r=0$, and to the scheme in \cite{Lim2010layered} by setting $W_1=U_1=V_1=W_2=U_2=V_2=V_r=0$.
\end{rem}

\begin{rem}
In our proposed scheme, the Markov common messages bring a coherent gain between the source and the relay, but they require the relay to split its power for each message because of superposition coding. For the independent common messages, the relay can use its whole power to send their bin index, which can then solely represent one message when decoding because of side information on the other message at each destination.
\end{rem}

\begin{rem}
Rate region for the Gaussian TWRC can be obtained by applying Theorem \ref{thm:rate_two} with the following signaling:
\begin{align}
X_1&=\alpha_1 S_1+\beta_1 S_2 +\gamma_1 S_3+\delta_1 S_4\nonumber\\
X_2&=\alpha_2 S_5+\beta_2 S_6 +\gamma_2 S_7+\delta_2 S_8\nonumber\\
X_r&=\alpha_{31} S_1 +\alpha_{32} S_5+ \gamma_3 S_9+\beta_3 S_{10}+\delta_3 S_{11}\nonumber\\
\hat{Y}_r&=Y_r+\hat{Z}_r; \quad
\tilde{Y}_r=\hat{Y}_r+\tilde{Z}_r,
\end{align}
where the power allocations satisfy
\begin{align}
\alpha_1^2+\beta_1^2+\gamma_1^2+\delta_1^2 &\leq P, ~~~~
\alpha_2^2+\beta_2^2+\gamma_2^2+\delta_2^2 \leq P,\nonumber\\
\alpha_{31}^2+\alpha_{32}^2+\beta_3^2+\gamma_3^2+\delta_3^2 &\leq P,
\end{align}
all $S_i\sim\mathcal{N}(0,1)$ and $\hat{Z}_r\sim\mathcal{N}(0,\hat{Q}),\tilde{Z}_r\sim\mathcal{N}(0,\tilde{Q})$ are independent.
The specific rate constraints for the Gaussian channel, however, are omitted because of the lack of space.
\end{rem}

\begin{figure}[t]
    \begin{center}
    \includegraphics[width=0.35\textwidth]{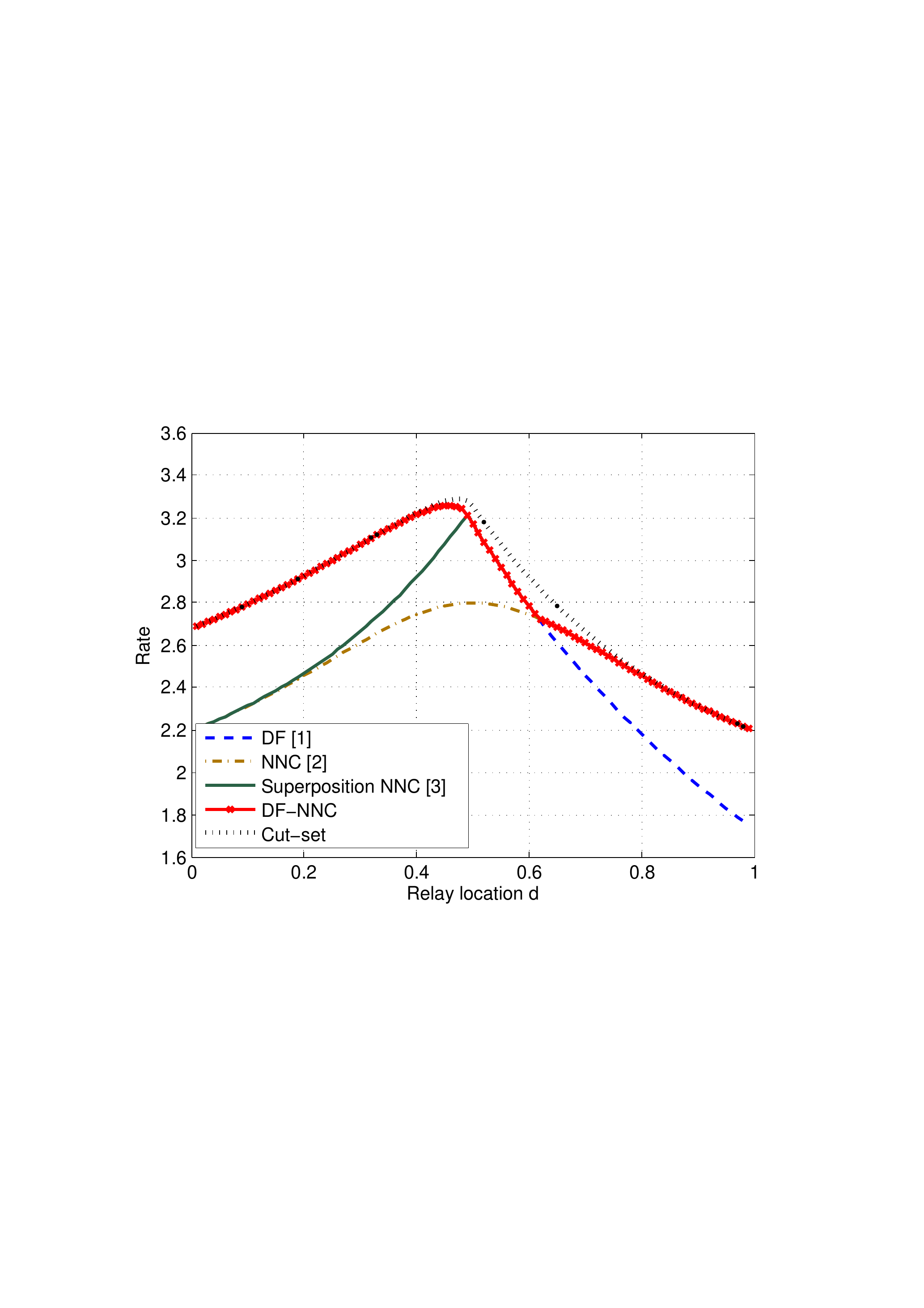}
    \caption{Achievable rate comparison for the one-way relay channel with $P=10, g_1=d^{-\gamma/2},g_2=(1-d)^{-\gamma/2}, g=1, \gamma=3$.} \label{fig:sim1}
    \end{center}
\end{figure}

\section{Numerical Results}
We numerically compare the performance of the proposed combined   schemes with the original DF   and NNC. The rate regions are obtained by exhaustive simulation, but optimization can also be used to achieve the maximum rates.
Consider the Gaussian channels as in \eqref{Gaussian_one_model} and \eqref{GTWRC}. Assume all the nodes are on a straight line. The relay is at a distance $d$ from the source and distance $1-d$ from the destination which makes $g_1=d^{-\gamma/2}$ and $g_2=(1-d)^{-\gamma/2}$, where $\gamma$ is the path loss exponent. Figure \ref{fig:sim1} shows the achievable rate for the one-way relay channel with $P=10, \gamma=3$. The combined DF-NNC scheme supersedes both the DF and NNC schemes. It can achieve the capacity of the one-way relay channel when the relay is close to either the source or the destination.  Figure \ref{fig:sim3} shows the sum rate for the two-way relay channel with $P=10, \gamma=3$. Our proposed scheme achieves larger sum rate than all 3 individual schemes when the relay is close to either user, while reducing to layered NNC when the relay is close to the middle of the two users.
Figure \ref{fig:sim2} shows the achievable rate regions for the Gaussian TWRC using these 4 schemes. The achievable  region of our proposed scheme encompasses all 3 individual schemes.

\begin{figure}[t]
    \begin{center}
    \includegraphics[width=0.35\textwidth]{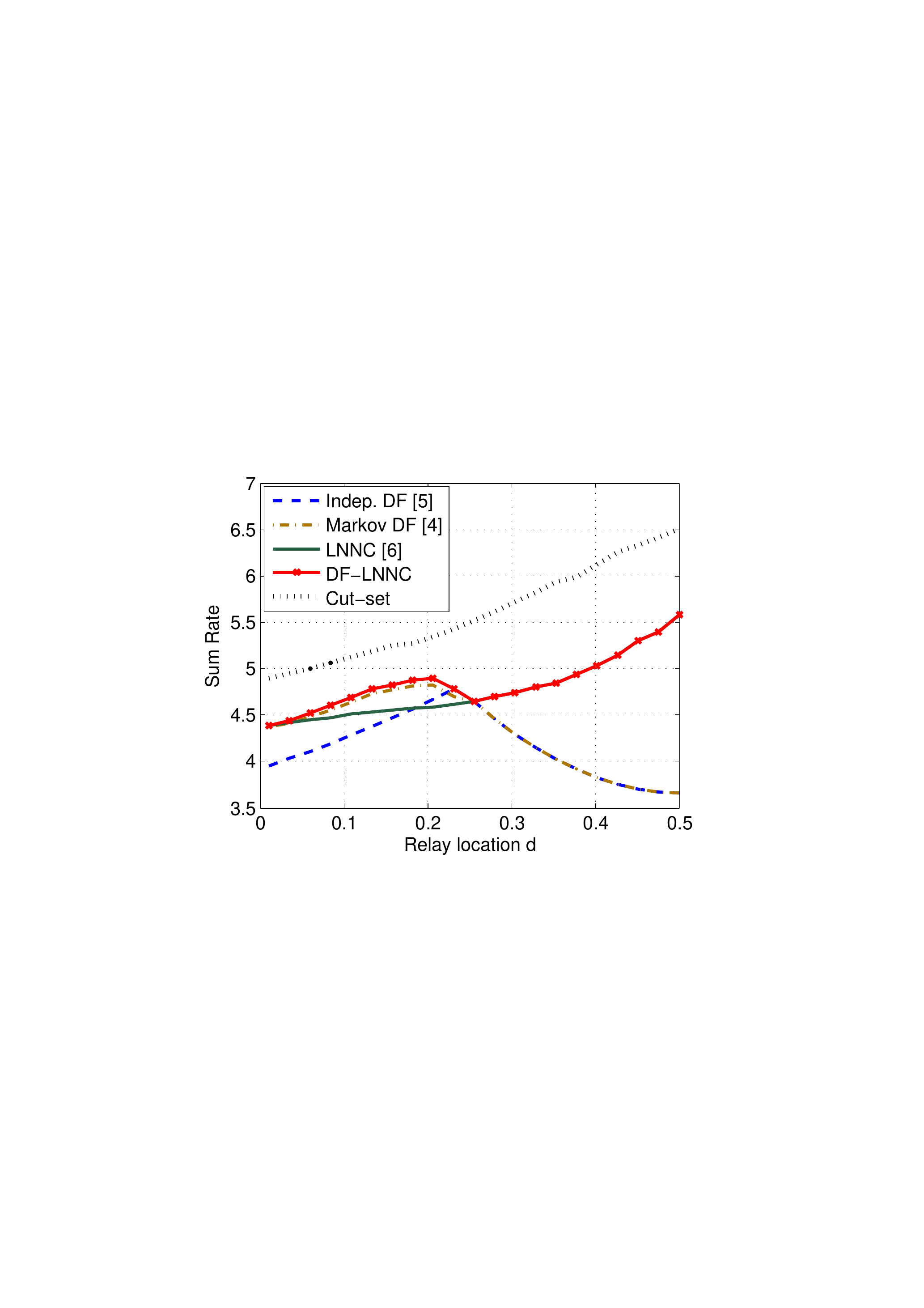}
    \caption{Sum rate for the two-way relay channel with $P=10, g_{r1}=g_{1r}=d^{-\gamma/2},g_{r2}=g_{2r}=(1-d)^{-\gamma/2}, g_{12}=g_{21}=1, \gamma=3$.} \label{fig:sim3}
    \end{center}
\end{figure}

\begin{figure}[t]
    \begin{center}
    \includegraphics[width=0.35\textwidth]{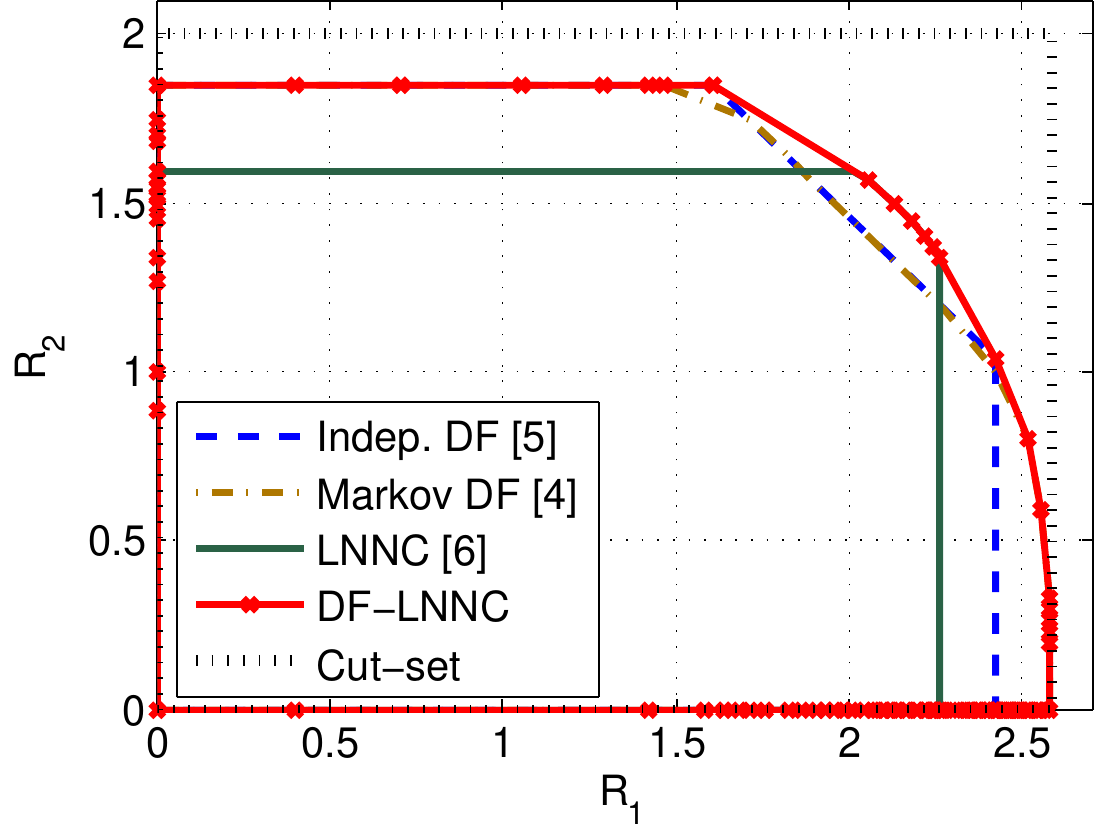}
    \caption{Achievable rate region comparison for the two-way relay channel with $P=3, g_{r1}=6,g_{1r}=2,g_{r2}=2,g_{2r}=3,g_{12}=1,g_{21}=0.5$.} \label{fig:sim2}
    \end{center}
\end{figure}

\section{Conclusion}
We have proposed two combined schemes: DF-NNC for the one-way and  DF-LNNC  for the two-way relay channels. Both  schemes perform message splitting, block Markov encoding, superposition  coding and noisy network coding. Each combined scheme encompasses all respective individual schemes (DF and NNC or LNNC) and strictly outperforms superposition NNC in \cite{Ramalingam2011superposition}.   These are initial results for combining decode-forward and noisy network coding for a multi-source network.

\bibliographystyle{IEEEtran}
\bibliography{reflist}

\end{document}